\documentclass[conference]{IEEEtran}
\IEEEoverridecommandlockouts

\usepackage{cite}
\usepackage{amsmath,amssymb,amsfonts,amsthm}
\usepackage{mathabx}
\usepackage{stmaryrd}
\usepackage{multirow}
\usepackage{float}
\usepackage{hyperref}
\usepackage{graphicx}
\usepackage{textcomp}
\usepackage{algorithm, algpseudocode}
\usepackage{xcolor,tikz, circuitikz}
\def\BibTeX{{\rm B\kern-.05em{\sc i\kern-.025em b}\kern-.08em
    T\kern-.1667em\lower.7ex\hbox{E}\kern-.125emX}}

\newcommand{\ket}[1]{|{#1}\rangle}

\newcommand{\ketbra}[1]{|{#1}\rangle\langle{#1}|}
\newcommand{\idg}[1]{{\bfseries #1)}}
\newcommand{\subfigimg}[3][,]{%
	\setbox1=\hbox{\includegraphics[#1]{#3}}% Store image in box
	\leavevmode\rlap{\usebox1}% Print image
	\rlap{\hspace*{2pt}\raisebox{\dimexpr\ht1-0.5\baselineskip}{{\bfseries \large\textsf{#2}}}}% Print label
	\phantom{\usebox1}% Insert appropriate spcing
}

\DeclareMathOperator{\tr}{tr}

\newtheorem{theorem}{Theorem}
\newtheorem{corollary}{Corollary}

\begin{document}

\title{Faster Optimal Decoder for Graph Codes with a Single Logical Qubit}
% Faster Optimal Decoder for LDPC Hypergraph Codes for Multiple Logical Qubits

\author{\IEEEauthorblockN{1\textsuperscript{st} Nirupam Basak}
\IEEEauthorblockA{\textit{Indian Statistical Institute}\\
Kolkata 700108, India \\
nirupambasak2020@iitkalumni.org}
\and
\IEEEauthorblockN{2\textsuperscript{nd} Goutam Paul}
\IEEEauthorblockA{\textit{Indian Statistical Institute}\\
Kolkata 700108, India \\
goutam.paul@isical.ac.in}
}

\maketitle

\begin{abstract}
In this work, we develop an efficient decoding method for graph codes, a class of stabilizer quantum error-correcting codes constructed from graph states. While optimal decoding is generally NP-hard, we propose a faster decoder exploiting the structural properties of the underlying graph states. Although distinct error patterns may yield the same syndrome, we demonstrate that the post-measurement state follows a well-defined structure determined by the projective syndrome measurement. Building on this idea, we introduce a hierarchical decoder in which each level can be solved in polynomial time. Additionally, this decoder achieves optimal decoding performance at the lower levels of the hierarchy. This strategy avoids the need for full maximum-likelihood decoding of graph codes. Numerical results illustrate the efficiency and effectiveness of the proposed approach.
\end{abstract}

\begin{IEEEkeywords}
graph code, quantum error correction, stabilizer code, quantum decoder
\end{IEEEkeywords}

\section{\label{sec:intro}Introduction}

Despite their potential to outperform classical computers in certain computational tasks, quantum systems are inherently fragile, as unavoidable interactions with the surrounding environment lead to decoherence and operational errors~\cite{zurek2003decoherence, preskill2018quantum}. These errors grow rapidly and pose a fundamental challenge to large-scale, fault-tolerant quantum computation. Consequently, building reliable quantum computers requires correcting these errors through quantum error correction (QEC)~\cite{shor1995scheme, steane1996error, gottesman1997stabilizer, knill1997theory, fowler2012surface, aharonov1997fault, gottesman2002introduction, lidar2013quantum}. QEC protects quantum information by encoding it across multiple physical qubits~\cite{bennett1996mixed, Kitaev1997quantum, Kitaev2003fault, fletcher2007optimum, fletcher2008structured, chao2018quantum, zurek1991decoherence, lloyd1993potentially, arute2019quantum, monroe2014large}.

Graph codes~\cite{schlingemann2001quantum, bell2014experimental_QEC, looi2008quantum, liao2022topological, grassl2002graphs, mackay2004sparsegraph} are a class of quantum error-correcting codes, where quantum information is encoded using the stabilizer formalism~\cite{hwang2016ontherelation, schlingemann2002stabilizer} defined by an underlying graph structure, yielding an $\llbracket n,k,d\rrbracket$ code that encodes $k$ logical qubits into $n$ physical qubits with distance $d$. In this representation, vertices correspond to physical qubits and edges denote entangling interactions, typically controlled-$Z$ operations, used to prepare the underlying graph state. Many important stabilizer codes, including topological and surface codes, can be expressed as graph codes up to local Clifford equivalence~\cite{schlingemann2001quantum, bell2023optimizing, bell2014experimental_QEC, looi2008quantum, liao2022topological}, making graph codes a useful and unifying framework for analyzing and designing quantum error correction schemes.

Stabilizer measurements, also known as syndrome measurements, produce characteristic patterns, called syndromes, that correspond to different error configurations~\cite{divincenzo1996faulttolerant, gottesman1997stabilizer, gottesman1998theory}. Decoding is the process of determining the most likely underlying error for a given syndrome. The performance of a quantum computer is therefore strongly influenced by the decoder~\cite{dennis2002topological, fowler2012surface, terhal2015quantum}.  In general, solving such a maximum-likelihood decoding (MLD)~\cite{poulin2006optimal, ferris2014tensor} problem is NP-hard~\cite{kuo2012hardness, hsieh2011np, berlekamp2003inherent}. This computational intractability highlights the need for decoding strategies that exploit the structure of specific code families.

{\em Our Contribution.}

In this work, we address the above challenge by analyzing the effect of noise on the underlying graph states through syndrome measurement. By leveraging the stabilizer formalism of the underlying graph state, we propose a decoder for graph codes with one logical qubit that can successfully decode any error pattern. More importantly, we find that even when arbitrary errors occur and syndromes are no longer unique for the error patterns, decoding and error correction are still possible using a very simple rule. The original logical state can be recovered by applying only Pauli $Z$ operations at the vertices where the measured syndrome equals $-1$. This leads to a significant simplification of the decoding process.

The results presented in this paper reduce the computational complexity of error correction with graph codes and offer practical advantages for large-scale quantum networks and distributed quantum information processing.

{\em Paper Outline. }

In Section~\ref{sec:prelim}, we briefly review graph states and graph codes, along with the standard MLD problem. Section~\ref{sec:single_noise} then presents the effects of single-qubit noise on graph states, together with the corresponding error syndromes. The impact of arbitrary noise is analyzed in Section~\ref{sec:rand_noise}, where we also show that such noise can be eliminated using a simple phase-correction strategy. In Section~\ref{sec:decode_graph_code}, we present a simple optimal decoder for a graph code correcting any arbitrary error. Numerical results are presented in Section~\ref{sec:numerics}. Finally, Section~\ref{sec:concl_future} concludes the paper and outlines directions for future work.

\section{\label{sec:prelim}Preliminaries}

A \emph{graph state}~\cite{nest2004graphical} is a pure multipartite entangled quantum state associated with an undirected simple finite graph $G=(V,E)$, where $V\subset\mathbb{N}$ is the finite set of vertices and $E\subseteq\{\{i, j\}:i,j\in V,i\neq j\}$ is the set of edges. It is defined as
\begin{equation}
\label{eq:graph_state}
\ket{G}:=\left(\prod_{\{i,j\}\in E}CZ_{i,j}\right)\ket{+}^{\otimes V},
\end{equation}
where $CZ_{i,j}$ is the controlled-$Z$ operation between the vertices $i$ and $j$ and $\ket{+}:=\frac{\ket{0}+\ket{1}}{\sqrt{2}}$.

A \emph{stabilizer} of the graph state~\eqref{eq:graph_state} is an operator $S$ such that $S\ket{G}=\ket{G}$. The \emph{stabilizer group} for the graph state is generated by $K^G_g=\{S_1, S_2, \dots, S_n\}$~\cite{hein2006entanglement, zhang2009stabilizer, kruszynska2006entanglement, lobl2025transforming}, with
\begin{equation}
\label{eq:stab_generator}
S_i=X_iZ_{N_i},\;\forall i\in V,
\end{equation}
where $N_i:=\{j\in V:\{i,j\}\in E\}$ is the neighborhood of $i$ and
\begin{equation*}
\begin{aligned}
&Z_{N_i}:=\prod_{j\in N_i}Z_j=\prod_{j=1}^nZ_j^{I_E(\{i,j\})},\\
&I_{E}(\chi):=\begin{cases}
1&\text{if }\chi\in E,\\
0&\text{otherwise.}
\end{cases}
\end{aligned}
\end{equation*}
Note that for all $i\in V\text{ and }A\subseteq V$
\allowdisplaybreaks
\begin{align}
&S_iZ_A=X_iZ_{N_i}Z_A=X_iZ_AZ_{N_i}\notag\\
&=(-1)^{I_A(i)}Z_AX_iZ_{N_i}=(-1)^{I_A(i)}Z_{N_i}S_i,\label{eq:S_Z}\\
\text{and }&S_iX_A=X_iZ_{N_i}X_A=(-1)^{|A\cap N_i|}X_iX_AZ_{N_i}\notag\\
&=(-1)^{|A\cap N_i|}X_AX_iZ_{N_i}=(-1)^{|A\cap N_i|}X_AS_i.\label{eq:S_X}
\end{align}

Let $K:=\langle S_1, S_2, \dots, S_n\rangle,\text{ for }n=|V|$, be the stabilizer group of the graph state $\ket{G}$. The syndrome measurement of operator $S_i$ is a projective measurement that projects the quantum state onto one of its eigenspaces and yields a classical syndrome bit corresponding to the eigenvalue $(\pm1)$. The projectors are given by~\cite{hein2004multiparty, dangniam2020optimal, zhang2009stabilizer, zhou2008irreducible, dutta2024smallest}
\begin{equation}
\label{eq:stab_projectors}
P_{s_i}:=\frac{1}{2}(I+s_iS_i),
\end{equation}
producing a syndrome bit $s_i=\pm1$. Upon measurement of all the stabilizer generators, a state $\ket{\Psi}$ would be projected onto
\begin{equation}
\ket{\Psi}':=\frac{P_s\ket{\Psi}}{\sqrt{\tr[P_s\ketbra{\Psi}]}},
\end{equation}
where
\begin{equation}
\label{eq:projector}
P_s:=\prod_{i=1}^nP_{s_i}=\prod_{i=1}^n\frac{1}{2}(I+s_iS_i)
\end{equation}
is the projector, producing the syndrome $s=(s_1, s_2, \cdots, s_n)\in\{\pm1\}^n$. Then for any $i$,
\begin{align*}
P_{s_i}\ket{G}&=\frac{1}{2}(I+s_iS_i)\ket{G}
=\frac{1}{2}(1+s_i)\ket{G},
\end{align*}
as $S_i\ket{G}=\ket{G}$. Note that for a valid state, $s_i$ must be $+1$ for all $i$. Thus, $P_{s_i}\ket{G}=\ket{G}$ for all $i$, and hence
\begin{equation}
P_s\ket{G}=
\left(\prod_{i=1}^nP_{s_i}\right)\ket{G}=\ket{G},\label{eq:pure_projection}
\end{equation}
producing the syndrome $s=(+1, +1, \cdots, +1)$. Let us denote this syndrome by $s^{(0)}$.

A graph code~\cite{schlingemann2001quantum, bell2014experimental_QEC, looi2008quantum, liao2022topological, grassl2002graphs, mackay2004sparsegraph}, $\mathcal{C}$, is an $\llbracket n, k, d\rrbracket$ stabilizer code defined using a graph state $\ket{G}$ as follows. Let $S_i$ be any stabilizer generator of $\ket{G}$. Define a logical $Z$ operator as $\widebar{Z}:=S_i$. A logical $X$ operator $\widebar{X}$ may be any operator that anti-commutes with $\widebar{Z}$. Two obvious choices for $\widebar{X}$ are $Z_i$ and $Z_1Z_2\cdots Z_n$, where the former one gives a distance 1 code, and the latter one gives a code with distance $\geq3$ for $n\geq 5$. Thus, the logical qubits are given by
\begin{equation}
\label{eq:logical_state}
\ket{0}_L:=\ket{G},\;\ket{1}_L:=\widebar{X}\ket{G}.
\end{equation}

Suppose $K^G_g=\{S_1, S_2,\dots, S_n\}$ is the set of stabilizer generators of $\ket{G}$. Let us define two sets $K^{(1)}$ and $K^{(2)}$ as
\begin{equation}
\begin{aligned}
K^{(1)}&=\{S_i:[S_i,\widebar{X}]=0,S_i\in K^G_g\},\\
K^{(2)}&=\{S_i:\{S_i,\widebar{X}\}=0,S_i\in K^G_g\},
\end{aligned}
\end{equation}
where $[,]$ and $\{,\}$ are the commutator and anti-commutator operators, respectively. Therefore, for any $S_i\in K^{(1)}$, $S_i\ket{0}_L=\ket{0}_L$ and $S_i\ket{1}_L=\ket{1}_L$, ensuring $S_i\in K^{(1)}$ as a stabilizer of the codespace. However, any $S_i\in K^{(2)}$ produces a $-1$ eigenvalue when it acts on $\ket{1}_L$. Since for any three matrices with $\{A,B\}=0=\{A,C\}$, $ABC=-BAC=BCA$, for $S_i,S_j\in K^{(2)}$, $S_iS_j$ commutes with $\widebar{X}$ and thus becomes a stabilizer of the codespace. As $\widebar{Z}=S_i\in K^{(2)}$, a complete set of stabilizer generators for the codespace is given by
\begin{equation}
K^\mathcal{C}_g=K^{(1)}\cup\{\widebar{Z}S_j:S_j\in K^{(2)}\}.
\end{equation}
Note that $\widebar{Z}$ commutes with every element in $K^\mathcal{C}_g$. This gives a graph code with $1$ logical qubit. Repeating the same process $k$-times, one can get a graph code with $k$ logical qubits~\cite{hwang2016ontherelation}.

To decode an $\llbracket n,k,d\rrbracket$ quantum error-correcting code, the standard strategy~\cite{demarti2024decoding, basak2026hierarchical} is to identify the most likely error pattern $e \in \{0,1\}^n$ from the measured syndrome $s' \in \{0,1\}^m$, where $m = n-k$. The syndromes are determined by the parity-check matrix $H \in \{0,1\}^{m \times n}$ through the relation $s' = He\pmod{2}$. Note that the syndrome $s'$ defined here and the syndrome $s$ defined by~\eqref{eq:stab_projectors} are equivalent as $s'_i=\frac{1+s_i}{2},\;\forall i$.

Assuming that an error on the $i$-th qubit occurs independently with probability $p_i$, the probability of an error pattern $e = (e_1, e_2, \dots, e_n)$ is given by
\[
\Pr[e] = \prod_{i=1}^n (1-p_i)^{1-e_i} p_i^{e_i}
= \Big(\prod_{i=1}^n (1-p_i)\Big)\prod_{i=1}^n \Big(\frac{p_i}{1-p_i}\Big)^{e_i}.
\]
The most likely error pattern $e^*$ is then obtained by solving the following MLD problem:
\begin{equation}
\label{eq:MLD_problem}
\begin{gathered}
e^* = \arg\min_e \sum_{i=1}^n e_i \gamma_i \\
\mathrm{s.t.}\quad
He = s \quad \text{and} \quad
e_i \in \{0,1\},\ i \in \{1,\dots,n\},
\end{gathered}
\end{equation}
where $\gamma_i = \log \frac{1-p_i}{p_i}$.

For simplicity, we assume $p_i = p$ for all $i$, which yields $\gamma_i = \log \frac{1-p}{p}$ and reduces the objective to
\begin{equation}
\label{eq:MLD_problem_simplified}
\arg\min_e \sum_{i=1}^n e_i = \arg\min_e w(e),
\end{equation}
where $w(e)$ denotes the Hamming weight of $e$, i.e., the number of ones in $e$. Note that MLD is NP-hard in general~\cite{hsieh2011np, berlekamp2003inherent}.

\section{\label{sec:single_noise}Graph States under Single-qubit Noise}

In this section, we first find the syndromes corresponding to single-qubit bit-flip and phase-flip errors for a general graph state by measuring the stabilizers. Then we show that this measurement projects any arbitrary noise onto a combination of the bit-flip and phase-flip noise. Thus, to correct any arbitrary noise, it is sufficient to correct only these two noises.

\textbf{Bit-flip noise: }A bit-flip noise on a vertex $v$ flips the state of that vertex in the computational basis ($\ket{0}\leftrightarrow\ket{1}$) with some probability $q$. Under this noise, the state~\eqref{eq:graph_state} becomes an ensemble of pure states $\{(\sqrt{1-q}, \ket{G}), (\sqrt{q}, X_v\ket{G})\}$. As shown in~\eqref{eq:pure_projection}, the state $\ket{G}$ would not change under syndrome measurement and produce the syndrome $s_i=1,\;i=1, 2, \dots, n$. On the other hand, for any stabilizer $S_i$, $S_iX_v=(-1)^{I_{N_i}(v)}X_vS_i$. Then
\allowdisplaybreaks
\begin{align*}
P_{s_i}X_v\ket{G}&=\frac{1}{2}(I+s_iS_i)X_v\ket{G}\\
&=\frac{1}{2}\left(1+(-1)^{I_{N_i}(v)}s_i\right)X_v\ket{G},
\end{align*}
producing the syndrome bit as $-1$ if $i\in N_v$ and $+1$, otherwise. Thus, the syndrome measurement on all the vertices would produce a syndrome $s=(s_1, s_2, \cdots, s_n)$, where
\begin{equation}
\label{eq:synd_bit-flip}
s_i=\begin{cases}
-1,&\text{if }i\in N_v,\\
+1,&\text{otherwise,}
\end{cases}
\end{equation}
and the projected state would be $P_sX_v\ket{G}=X_v\ket{G}.$ Let us denote this syndrome by $s^{(b_v)}$. Therefore, the syndrome measurement does not change the ensemble. However, it produces syndromes $s^{(0)}$ and $s^{(b_v)}$ with probabilities $1-q$ and $q$, respectively. Now, by applying the Pauli operation $I$ or $X_v$, for syndromes $s^{(0)}$ and $s^{(b_v)}$, respectively, one can get back the original state $\ket{G}$ with probability $1$.

\textbf{Phase-flip noise: }A phase-flip noise on a vertex $v$ flips the phase of that vertex in the computational basis ($\ket{+}\leftrightarrow\ket{-}$) with some probability $q$. Under this noise, the state~\eqref{eq:graph_state} becomes an ensemble of pure states $\{(\sqrt{1-q}, \ket{G}), (\sqrt{q}, Z_v\ket{G})\}$. As for any stabilizer $S_i$, $S_iZ_v=(-1)^{\delta_{iv}}Z_vS_i$, we have
\begin{align*}
P_sZ_v\ket{G}&=\prod_{i=1}^n\frac{1}{2}(I+s_iS_i)Z_v\ket{G}\\
&=\prod_{i=1}^n\frac{1}{2}(1+(-1)^{\delta_{iv}}s_i)Z_v\ket{G}.
\end{align*}
Thus, for the state $Z_v\ket{G}$, the syndrome becomes
\begin{equation}
\label{eq:synd_phase-flip}
s_i=(-1)^{\delta_{iv}}
\end{equation}
and the projected state would be $P_sZ_v\ket{G}=Z_v\ket{G}$. Let us denote this syndrome by $s^{(p_v)}$. Therefore, the syndrome measurement does not change the ensemble. Now, similar to the bit-flip error correction, by applying the Pauli operation $I$ or $Z_v$, for syndromes $s^{(0)}$ and $s^{(p_v)}$, respectively, one can get back the original state $\ket{G}$ with probability $1$.

\textbf{Arbitrary noise: }Any arbitrary single-qubit noise can be modeled by the Kraus operators~\cite{tong2006kraus} $E^{(0)}, E^{(1)}, \dots, E^{(m)}$ for some positive integer $m$, satisfying the completeness property 
\begin{equation}
\label{eq:completeness_prop}
\sum\limits_{j=0}^m\left(E^{(j)}\right)^\dagger E^{(j)}=I.
\end{equation}
The action of this noise on vertex $v$ would change the density operator $\rho_G=\ketbra{G}$, corresponding to the state~\eqref{eq:graph_state}, to
\begin{equation}
\label{eq:kraus_act}
\rho'_G=\sum_{j=0}^mE^{(j)}_v\rho_G\left(E^{(j)}_v\right)^\dagger.
\end{equation}
Now, as Pauli operators $X, Z$ and $ZX$, along with $I$ form a basis, we have
\begin{equation}
E^{(j)}_v=a_jI+b_jX_v+c_jZ_v+d_jZ_vX_v,\;\forall j
\end{equation}
for some $a_j, b_j, c_j, d_j\in\mathbb{C}\;\forall j$. The completeness property~\eqref{eq:completeness_prop} gives $\sum\limits_j(a_ja_j^\dagger+b_jb_j^\dagger+c_jc_j^\dagger+d_jd_j^\dagger)=1$. Thus, we have
\begin{align}
&P_sE^{(j)}_v\ket{G}=P_s(a_jI+b_jX_v+c_jZ_v+d_jZ_vX_v)\ket{G}\notag\\
=&a_jP_s\ket{G}+b_jP_sX_v\ket{G}+c_jP_sZ_v\ket{G}+d_jP_sZ_vX_v\ket{G}.
\end{align}

Thus, upon measuring the state~\eqref{eq:kraus_act}, we can get the syndrome and projected state as follows. If the measured syndrome is found to be $s^{(0)}, s^{(b_v)}$ or $s^{(p_v)}$, then the state projects onto $\ket{G}, X_v\ket{G}$ or $Z_v\ket{G}$, respectively. On the other hand, the projection $P_sZ_vX_v\ket{G}$ would produce syndrome $s^{(p_v)}\odot s^{(b_v)}$, where $\odot$ denotes bit-wise multiplication, projecting the state onto $Z_vX_v\ket{G}$.

Thus, we can say that the syndrome measurement projects arbitrary single-qubit noise into a single-qubit Pauli noise. After getting the syndrome, one can apply an appropriate Pauli operation to get back the original state $\ket{G}$ with probability $\sum\limits_j(a_ja_j^\dagger+b_jb_j^\dagger+c_jc_j^\dagger+d_jd_j^\dagger)=1$.

\section{\label{sec:rand_noise}Correcting Arbitrary Noise on Graph State by Phase Correction}

So far, we have assumed that noise affects only a single vertex. Under this simplified assumption, there is a one-to-one correspondence between the generated syndrome and the resulting projected state. This correspondence, however, does not generally hold. In practice, noise may act on any vertex in the system.

Let $v$ be any vertex. Consider phase-flip noise on each vertex in $N_v$. Then the projected state after syndrome measurement would be
\begin{align*}
&P_sZ_{N_v}\ket{G}=\prod_{j=1}^n\frac{1}{2}(I+s_jS_j)Z_{N_v}\ket{G}\\
=&Z_{N_v}\prod_{j=1}^n\frac{1}{2}\left(I+(-1)^{I_{N_v}(j)}s_jS_j\right)\ket{G}\\
=&Z_{N_v}\prod_{j=1}^n\frac{1}{2}\left(1+(-1)^{I_{N_v}(j)}s_j\right)\ket{G}=Z_{N_v}\ket{G},
\end{align*}
producing syndrome as $s_j=-1$, if $j\in N_v$ and $1$, otherwise. Here, the second equality holds due to~\eqref{eq:S_Z}. Note that this syndrome is the same as $s^{(b_v)}$ as defined by~\eqref{eq:synd_bit-flip}, denoting a bit-flip error at vertex $v$. Although the syndrome $s^{(b_v)}$ denotes two different errors on $\ket{G}$, this can be uniquely corrected as follows.

\emph{Case I. }Let the error be phase-flips on each vertex in $N_v$. Then the projected state being $Z_{N_v}\ket{G}$, the operation $U=Z_{N_v}$ would correct it as $Z_{N_v}Z_{N_v}\ket{G}=\ket{G}$.

\emph{Case II. }Let the error be a bit-flip on vertex $v$. Then the projected state would be $X_v\ket{G}$. Therefore, $UX_v\ket{G}=Z_{N_v}X_v\ket{G}=X_vZ_{N_v}\ket{G}=S_v\ket{G}=\ket{G}$.

Thus, we can consider $U=Z_{N_v}$ to be the correction operation when the syndrome is given by $s^{(b_v)}$. This motivates us for the following theorem.

\begin{theorem}
\label{th:graph_state_Pauli}
Let $s = (s_1, s_2, \cdots, s_n)$ denote the syndrome obtained from syndrome measurements performed on a graph state affected by arbitrary Pauli errors. The state can be corrected by applying a $Z$ operator to each vertex $v$ for which $s_v = -1$.
\end{theorem}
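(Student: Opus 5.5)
We write an arbitrary Pauli error, up to a global phase, in the form $E = X_A Z_B$ for some subsets $A,B\subseteq V$; every $n$-qubit Pauli operator can be written this way, since $Y\propto ZX$. The plan is to compute the syndrome of $E\ket{G}$ directly, and then check that $Z_{\{v:s_v=-1\}}E$ acts trivially on $\ket{G}$ up to a global phase.

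First, the syndrome. From \eqref{eq:S_Z} and \eqref{eq:S_X} we have
\[
S_iX_AZ_B=(-1)^{|A\cap N_i|+I_B(i)}X_AZ_BS_i .
\]
Exactly as in the single-qubit calculations of Section~\ref{sec:single_noise}, it follows that
\[
P_sE\ket{G}=\prod_i\tfrac12\bigl(1+(-1)^{|A\cap N_i|+I_B(i)}s_i\bigr)E\ket{G}.
\]
So the measurement returns the deterministic syndrome $s_i=(-1)^{|A\cap N_i|+I_B(i)}$, and the projected state is $E\ket{G}$ itself. The set of flagged vertices is therefore
\[
D:=\{i: s_i=-1\}=B\,\triangle\,\bigl(\textstyle\triangle_{a\in A}N_a\bigr),
\]
because $i$ lies in an odd number of the neighbourhoods $N_a$ with $a\in A$ exactly when $|A\cap N_i|$ is odd.

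Second, we apply the correction $Z_D$. Since $Z$'s commute, $Z_D=Z_B\prod_{a\in A}Z_{N_a}$, and hence
\[
Z_DE\ket{G}\propto\Bigl(\prod_{a\in A}Z_{N_a}\Bigr)X_A\ket{G},
\]
using $Z_BZ_B=I$ after commuting $Z_B$ past $X_A$, which costs only a sign. Next we regroup $\prod_{a\in A}Z_{N_a}X_A$ as $\pm\prod_{a\in A}X_aZ_{N_a}=\pm\prod_{a\in A}S_a$. Each $S_a$ stabilizes $\ket{G}$, so the corrected state equals $\ket{G}$ up to a global phase $\pm1$ or $\pm i$. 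This generalizes Cases I and II above, which are the special cases $B=N_v,A=\emptyset$ and $A=\{v\},B=\emptyset$.

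For errors that are superpositions or mixtures of Pauli operators, as in Section~\ref{sec:single_noise}, each Pauli branch is projected onto a definite syndrome. Any two branches giving the same syndrome $s$ differ by an operator of the form $X_{A}Z_{B}$ with trivial syndrome. The single correction $Z_D$ maps each of them to a phase times $\ket{G}$, and I would argue these phases agree: two branches with equal $D$ differ by an element of $K$, and elements of $K$ act trivially on $\ket{G}$.

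The main obstacle is keeping track of the global phases, that is, the sign picked up when regrouping $Z_{N_a}$ and $X_a$ into $\prod S_a$, together with the factors of $i$ from $Y$. For a single Pauli error this is harmless. In the superposition case, however, one has to show that branches sharing a syndrome are related by a stabilizer element with sign $+1$, and not by $-1$ times one. I would handle this by noting that an operator with trivial syndrome which fixes $\ket{G}$ up to phase must, by the first step, be $\pm\prod_{a\in A}S_a$. A remaining relative sign would not destroy $\ket{G}$ itself, since the state is still a scalar multiple of $\ket{G}$; it is only the normalisation that needs checking. For the code-level statement with $\widebar{X}$ this is exactly where logical errors could enter, which I would defer to Section~\ref{sec:decode_graph_code}.
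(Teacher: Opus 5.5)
Your argument is correct and is the same as the paper's, which obtains this theorem as the special case of the proof of Theorem~\ref{th:graph_code_Pauli}. Both compute the syndrome $s_i=(-1)^{I_B(i)+|A\cap N_i|}$ and observe that $Z_{V^-}X_AZ_B$ equals $\prod_{a\in A}S_a$ up to a phase; this is the paper's identity $Z_TZ_{V_z}X_{V_x}=S_{V_x}$. The phase bookkeeping you flag as the main obstacle is not an issue, even for superpositions: each joint syndrome eigenspace of the $n$ independent generators is one-dimensional and spanned by $Z_D\ket{G}$, so the normalized post-measurement state is already a global phase times $Z_D\ket{G}$.
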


Since it directly follows from the proof of Theorem~\ref{th:graph_code_Pauli} in Section~\ref{sec:decode_graph_code}, we omit the proof here. 

As shown in Section~\ref{sec:single_noise}, syndrome measurements project arbitrary noise onto Pauli noise. Thus, the above theorem also holds for any arbitrary noise, and we can have the following corollary.

\begin{corollary}
\label{eq:cor_graph_state}
Let $s = (s_1, s_2, \cdots, s_n)$ denote the syndrome obtained from syndrome measurements performed on a graph state affected by any arbitrary noise. The state can be corrected by applying a $Z$ operator to each vertex $v$ for which $s_v = -1$.
\end{corollary}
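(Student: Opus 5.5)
The plan is to reduce the corollary to Theorem~\ref{th:graph_state_Pauli}. The bridge is the observation from Section~\ref{sec:single_noise} that syndrome measurement turns arbitrary noise into Pauli noise, extended now from one vertex to all vertices. Write the noise channel acting on $\rho_G=\ketbra{G}$ with Kraus operators $E^{(j)}$. Each of these now acts on all $n$ qubits, or is a tensor product of single-qubit Kraus operators. Expand each one in the $n$-qubit Pauli basis:
\[
E^{(j)}=\sum_{a,b\in\{0,1\}^n}c^{(j)}_{a,b}\,Z^{b}X^{a},\qquad Z^bX^a:=\prod_{v}Z_v^{b_v}\prod_v X_v^{a_v}.
\]
Completeness~\eqref{eq:completeness_prop} then gives $\sum_j\sum_{a,b}|c^{(j)}_{a,b}|^2=1$ once the cross terms are handled after projection. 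Those cross terms are the one point that needs care.

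\textbf{Step 1: each Pauli term is an eigenvector of every stabilizer.} Using~\eqref{eq:S_Z} and~\eqref{eq:S_X}, we get $S_iZ^bX^a\ket{G}=(-1)^{b_i+|A\cap N_i|}Z^bX^a\ket{G}$, where $A=\mathrm{supp}(a)$. So $P_s Z^bX^a\ket{G}$ equals $Z^bX^a\ket{G}$ when $s_i=(-1)^{b_i+|A\cap N_i|}$ for all $i$, and $0$ otherwise.

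\textbf{Step 2: collapse onto a fixed syndrome.} After outcome $s$, each Kraus branch collapses to
\[
P_sE^{(j)}\ket{G}=\sum_{(a,b)\,:\,\sigma(a,b)=s}c^{(j)}_{a,b}Z^bX^a\ket{G}.
\]
This is a superposition of possibly many Pauli errors, so unlike the single-vertex case the projection does not pick out a single Pauli.

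\textbf{Step 3: all surviving terms are the same state up to phase.} Here the proof of Theorem~\ref{th:graph_state_Pauli} does the work. Each $X_v$ can be rewritten as $X_v\ket{G}=Z_{N_v}\ket{G}$, using $S_v\ket{G}=\ket{G}$ and commuting the $Z$'s through at the cost of a sign. It follows that $Z^bX^a\ket{G}=\pm Z^{b'}\ket{G}$, where $b'=b+\sum_{v\in A}\mathbf{1}_{N_v}$. The syndrome of $Z^{b'}\ket{G}$ is exactly $s_i=(-1)^{b'_i}$. Hence every term with syndrome $s$ equals $\pm Z_{\{v:s_v=-1\}}\ket{G}$. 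The branch is therefore $\lambda_j(s)\,Z_{\{v:s_v=-1\}}\ket{G}$ for a scalar $\lambda_j(s)$, and the whole post-measurement state, conditioned on $s$, is the pure state $Z_{\{v:s_v=-1\}}\ket{G}$.

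\textbf{Step 4: apply the correction.} Applying $Z$ on each vertex with $s_v=-1$ returns $\ket{G}$ exactly, up to a global phase. Summing over outcomes $s$, the measurement probabilities add to $1$ by completeness. The corollary follows.

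The main obstacle is Step 3 for genuinely coherent noise, where the Pauli terms in a branch interfere. The identity $X_v\ket{G}=\pm Z_{N_v}\ket{G}$, made up to sign for products, shows that these terms are collinear. The interference can therefore only change the normalization $\lambda_j(s)$, never the direction of the state. I would check the signs from reordering $Z^bX^a$ carefully. Only global phases per term arise, and these get absorbed into $\lambda_j(s)$. Note that the statement concerns the graph state $\ket{G}$ itself, so all $n$ generators are measured and no logical ambiguity arises.
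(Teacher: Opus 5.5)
Your argument is correct, and it takes a more complete route than the paper. The paper proves the corollary in one line: Section~\ref{sec:single_noise} shows that syndrome measurement projects arbitrary noise onto Pauli noise, so Theorem~\ref{th:graph_state_Pauli} applies. That theorem in turn rests on the identity $Z_TZ_{V_z}X_{V_x}=S_{V_x}$ from the proof of Theorem~\ref{th:graph_code_Pauli}. However, Section~\ref{sec:single_noise} only treats a Kraus operator acting on a single vertex $v$. There $I,X_v,Z_v,Z_vX_v$ have pairwise distinct syndromes (for $N_v\neq\emptyset$), so $P_s$ really does select one Pauli.

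You instead expand general multi-qubit Kraus operators. You correctly observe that $P_sE^{(j)}\ket{G}$ is in general a coherent superposition of all Paulis sharing the syndrome $s$. You then close this gap by showing that all these terms are collinear on $\ket{G}$: the identity $X_v\ket{G}=Z_{N_v}\ket{G}$ gives $Z^bX^a\ket{G}=\pm Z_{V^-}\ket{G}$ with $V^-=\{v:s_v=-1\}$. Equivalently, the graph-state stabilizer is maximal, so syndrome-equivalent Paulis differ by a stabilizer element up to sign.

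This is the same algebraic fact the paper uses inside Theorem~\ref{th:graph_code_Pauli}. You apply it to post-measurement states rather than to individual error operators. That is exactly what is needed to justify the ``projects onto Pauli noise'' step for correlated or coherent multi-qubit noise. As a by-product, your Step~3 proves Theorem~\ref{th:graph_state_Pauli} directly, so you need not cite it. The paper's route is shorter, but as written it is justified only for single-vertex noise, or for noise that is already a Pauli mixture.

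One small correction: $\sum_j\sum_{a,b}|c^{(j)}_{a,b}|^2=1$ follows from completeness directly, by taking the normalized trace (Paulis are Hilbert--Schmidt orthogonal). It has nothing to do with cross terms. The cross terms affect only the individual outcome probabilities $\sum_j|\lambda_j(s)|^2$. These sum to $1$ over $s$ because $\sum_sP_s=I$, and that is all your Step~4 needs.
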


\section{\label{sec:decode_graph_code}Decoder for Graph Codes}

As the stabilizers of a graph code depend on the stabilizers of the underlying graph state, we can use the idea of error-correction for the graph state, described in Section~\ref{sec:rand_noise}, to design a decoder for the graph code. Suppose $\mathcal{C}$ is an $\llbracket n, 1, d\rrbracket$ graph code defined using a graph state $\ket{G}$. Then any state $\ket{\Psi}_L\in\mathcal{C}$ can be written as
\begin{equation}
\ket{\Psi}_L=\alpha\ket{0}_L+\beta\ket{1}_L,\;|\alpha|^2+|\beta|^2=1,
\end{equation}
where $\ket{0}_L$ and $\ket{1}_L$ are defined as~\eqref{eq:logical_state}. Also assume that
\begin{equation}
\label{eq:stabilizer_code}
K^\mathcal{C}_g=\{K_1, K_2, \cdots, K_{n-1}\},
\end{equation}
is the set of stabilizer generators of $\mathcal{C}$, where $K_i:=S_n^{I_Q(i)}S_i$, where $Q=\{i:q<i<n\}$ for some $0\leq q\leq n-2$, with $K^G_g=\{S_1, S_2, \dots, S_n\}$ being the set of stabilizer generators of $\ket{G}$. Then the logical $Z$ and $X$ operators would be $\widebar{Z}=S_n,\text{ and }\widebar{X}=Z_{Q\cup\{n\}}$. Now, we prove the following theorem.

\begin{theorem}
\label{th:graph_code_Pauli}
Suppose $\ket{\Psi}'_L$ denotes the state obtained after a Pauli error acts on $\ket{\Psi}_L \in \mathcal{C}$. Let $s=(s_1, s_2, \cdots, s_{n-1})$ be the syndrome obtained from the syndrome measurement performed on $\ket{\Psi}'_L$. Then the corresponding correction operator would be
\begin{equation}
U_{Corr}=S_\mathcal{I}\widebar{X}^aZ_{V^-},
\end{equation}
where $a\in \{0, 1\}$, $\mathcal{I}\subseteq V$ is some index set and $V^-:=\{v\in V|s_v=-1\}$.
\end{theorem}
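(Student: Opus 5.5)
The plan is to rewrite an arbitrary Pauli error in a normal form that separates a stabilizer-like part, which the syndrome cannot see, from a pure-$Z$ part, which the syndrome determines almost completely. Then I would match that $Z$ part against $Z_{V^-}$.

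\emph{Step 1 (normal form of the error).} Up to a global phase, any $n$-qubit Pauli error can be written as $E\propto X_AZ_B$ with $A,B\subseteq V$. For each $v$ we have $S_v=X_vZ_{N_v}$, so $X_v=S_vZ_{N_v}$. Substituting vertex by vertex and commuting the $Z$'s through using~\eqref{eq:S_Z} (which only costs signs), I expect
\[
E\propto S_AZ_C,\qquad C:=B\,\triangle\,\Big(\textstyle\triangle_{v\in A}N_v\Big),
\]
where $S_A:=\prod_{v\in A}S_v$ and $\triangle$ denotes symmetric difference. This is the same mechanism as Case II of Section~\ref{sec:rand_noise}, where $X_v$ and $Z_{N_v}$ were seen to differ only by $S_v$.

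\emph{Step 2 (syndrome of $S_AZ_C$).} Every $K_i$ is a product of the $S_j$, so it commutes with $S_A$. Hence the syndrome depends only on $Z_C$. By~\eqref{eq:S_Z}, $S_j$ anticommutes with $Z_C$ exactly when $j\in C$. Since $K_i=S_n^{I_Q(i)}S_i$, this gives
\[
s_i=-1\iff I_C(i)+I_Q(i)I_C(n)\equiv 1 \pmod 2,\qquad 1\le i\le n-1.
\]
Setting $a:=I_C(n)$, this reads $V^-=(C\setminus\{n\})\,\triangle\,(aQ)$, so that
\[
Z_C=Z_{V^-}\,Z_{Q\cup\{n\}}^{\,a}=Z_{V^-}\widebar{X}^a .
\]
This uses $n\notin V^-$ and $\widebar{X}=Z_{Q\cup\{n\}}$.

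\emph{Step 3 (correction).} Pauli operators are self-inverse up to phase, so we have $E^{-1}\propto Z_CS_A\propto S_A\widebar{X}^aZ_{V^-}$. Therefore $U_{Corr}=S_{\mathcal I}\widebar{X}^aZ_{V^-}$ with $\mathcal I=A$ satisfies $U_{Corr}\ket{\Psi}'_L\propto\ket{\Psi}_L$, up to a global phase. The phase has to be checked as global rather than logical. All reorderings are of fixed Pauli operators, so they produce state-independent signs, and the claim holds.

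The main obstacle I anticipate is interpretive rather than computational. The syndrome fixes $V^-$ but fixes neither $a$ nor $\mathcal I$. The $S_{\mathcal I}$ part is harmless on the codespace: elements of $K^{(1)}$ act trivially, and elements of $K^{(2)}$ act as $\pm\widebar{Z}$-type logical phases. The ambiguity in $a$, by contrast, is a genuine logical $\widebar{X}$ ambiguity. I would therefore present the theorem as an existence statement: the correction always has this form, and $Z_{V^-}$ is the only syndrome-determined factor. Resolving $(a,\mathcal I)$ is then left to the decoder that follows.
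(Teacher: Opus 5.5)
Your proof is correct and is essentially the paper's argument. Take $A=V_x$ and $B=V_z$. Then your set $C$ is exactly the paper's $T=\{v:t(v)=1\}$ with $t(v)=I_{V_z}(v)+|V_x\cap N_v| \bmod 2$. Your factorization $E\propto S_AZ_C$ is the paper's identity $Z_TZ_{V_z}X_{V_x}=S_{V_x}$; you apply it before computing the syndrome rather than after, and you track signs more carefully. Finally, your $(\mathcal I,a)=(A,I_C(n))$ is the paper's $(V_x,t(n))$.

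One correction to your closing remark: the $\mathcal I$-ambiguity is not harmless. On $\mathcal C$, $S_{\mathcal I}$ acts as $\widebar{Z}^{\,|\mathcal I\cap(Q\cup\{n\})|}$, which is a genuine logical phase flip. So both $a$ and the parity of $|\mathcal I\cap(Q\cup\{n\})|$ are real logical ambiguities that the decoder must resolve.
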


\begin{proof}
Without loss of generality, assume
\begin{equation}
\ket{\Psi}'_L=Z_{V_z}X_{V_x}\ket{\Psi}_L,
\end{equation}
for some $V_z, V_x\subseteq V$. Then, for $1\leq v<n$, we have
\begin{align}
&\frac{1}{2}(I+s_vK_v)Z_{V_z}X_{V_x}\ket{\Psi}_L\notag\\
=&\frac{1}{2}\left(I+s_vS_n^{I_Q(v)}S_v\right)Z_{V_z}X_{V_x}\ket{\Psi}_L\notag\\
=&\frac{1}{2}\left(I+(-1)^{t(v)}s_vS_n^{I_Q(v)}\right)Z_{V_z}X_{V_x}\ket{\Psi}_L\notag\\
=&\frac{1}{2}\left(I+(-1)^{t(v)+I_Q(v)\cdot t(n)}s_v\right)Z_{V_z}X_{V_x}\ket{\Psi}_L,
\end{align}
with
\begin{equation}
\label{eq:t}
t(i):=I_{V_z}(i)+|V_x\cap N_i|\pmod{2},\;1\leq i\leq n.
\end{equation}
Here, the first equality is due to the definition of $K_v$, and the next two equalities are for~\eqref{eq:S_Z} and~\eqref{eq:S_X}. Therefore, the syndrome measurement would project the state $\ket{\Psi}'_L$ onto itself with syndrome $s_v=(-1)^{t(v)+I_Q(v)\cdot t(n)}$. Then, defining $T:=\{v\in V:t(v)=1\}$, we have
\begin{align}
V^-&=\{v<n:t(v)+I_Q(v)\cdot t(n)=1\}\notag\\
\text{and }Z_{V^-}&=\prod_{v=1}^{n-1}Z_v^{t(v)+I_Q(v)\cdot t(n)}\notag\\
&=\left(\prod_{v\in V}Z_v^{t(v)}\right)Z_n^{t(n)}\prod_{v\in Q}Z_v^{t(n)}\notag\\
&=Z_T\left(Z_{Q\cup\{n\}}\right)^{t(n)}
\end{align}

Also,
\begin{align}
Z_TZ_{V_z}X_{V_x}&=Z_{V_z}\left(\prod_{v\in V}Z_v^{|V_x\cap N_v|}\right)Z_{V_z}X_{V_x}\notag\\
&=\left(\prod_{v\in V_x}Z_{N_v}\right)\prod_{v\in V_x}X_v\notag\\
&=S_{V_x},\\
\text{and }\;Z_{V^-}Z_{V_z}X_{V_x}&=\left(Z_{Q\cup \{n\}}\right)^{t(n)}S_{V_x}=X^{t(n)}_LS_{V_x},
\end{align}
indicating that
\begin{equation}
\label{eq:correction_op_graph_code}
U_{Corr}=S_{V_x}\widebar{X}^{t(n)}Z_{V^-}
\end{equation}
is the correction operation for the error $Z_{V_z}X_{V_x}$.\\
This completes the proof.
\end{proof}

Since the syndrome measurement projects arbitrary noise onto a Pauli noise, this theorem also holds for arbitrary noise.

Note that $V_x$ and $t(n)$, which depend on the noise, are two unknown quantities. Therefore, one cannot directly apply the correction operation~\eqref{eq:correction_op_graph_code} to correct an error. However, this provides a structural formulation of correction operations that can be used to build a hierarchical decoder for a graph code, as shown in Algorithm~\ref{alg:decoder}.

\begin{algorithm}
\caption{\label{alg:decoder}Graph Code Decoder}
\begin{algorithmic}[1]
\Require $\widebar{X},\;K^G_g=\{S_1, S_2, \dots, S_n\}\text{ and }s=(s_1, s_2, \cdots, s_{n-1}).$
\State Calculate $Z_{V^-}:=\prod\limits_{i\in V^-}Z_i$ and $\widebar{X}Z_{V^-},$ where $V^-:=\{v|s_v=-1\}.$
\State $wt\leftarrow n$ and $\ell\leftarrow0.$
\While{$\ell<wt$}
\State $U_{Corr}\leftarrow\arg\min\{w(S_\mathcal{I}Z_{V^-}),w(S_\mathcal{I}\widebar{X}Z_{V^-}):\mathcal{I}\subseteq V,\,|\mathcal{I}|=\ell\}\}.$
\State $wt\leftarrow\min\{w(S_\mathcal{I}Z_{V^-}), w(S_\mathcal{I}\widebar{X}Z_{V^-}):\mathcal{I}\subseteq V,\,|\mathcal{I}|=\ell\}.$
\State $\ell\leftarrow\ell+1.$
\EndWhile
\State\Return $U_{Corr}.$
\end{algorithmic}
\end{algorithm}

Here, Algorithm~\ref{alg:decoder} would find the correction operation with the lowest weight of Pauli operators, similar to the MLD problem~\eqref{eq:MLD_problem_simplified}. As each $S_i$ contains a $X$ at $i$-th position and $\widebar{X}$ and $Z_{V^-}$ do not contain any $X$, at level $\ell$, $w(S_\mathcal{I}Z_{V^-}), w(S_\mathcal{I}\widebar{X}Z_{V^-})\geq\ell$. Therefore, if $wt\leq\ell$, for some level $\ell$, one cannot get a lower weight at any level $>\ell$. This ensures that the corresponding weight is the minimum weight. As, for level $\ell$, there are $\binom{n}{l}$ possible choices for $\mathcal{I}$, the corresponding run-time would be $n^{\mathcal{O}(\ell)}$. Also, for a distance $d$ code, the maximum weight of any correctable error being $\frac{d-1}{2}$, the maximum level for this algorithm would be $\frac{d-1}{2}$.

%  From binomial theory
% Worst case complexity << $2^{n-2}$
% Average case complexity << $2^{n-3}$

\section{\label{sec:numerics}Numerical Results}

We have considered $G=C_n$, the graphs having a single cycle with $n$ vertices for $n=5, 9$ and $11$. With $\widebar{X}=Z_V$, these would give distance-3 codes for $n=5, 9$ and the smallest distance-5 code for $n=11$~\cite{gheorghiu2010location, yu2007graphical, calderbank1998quantum}. For $n=11$, taking $\widebar{X}=Z_3Z_4\cdots Z_n$ we get a distance-3 code. We analyze the logical error rate $p_\text{L}$ as a function of the physical error rate $p$ and the hierarchy level $\ell$. The performance is compared against MLD formulated as a mixed-integer program (MIP), which achieves exact optimal decoding at the expense of computational cost~\cite{Vanderbei2020integer, feldman2003decoding, feldman2005using}.

In Figure~\ref{fig:plots}, we plot $p_L$ against $p$ for different $\ell$. The logical error for the MIP decoder is also plotted for benchmarking. We found that even $\ell=0$, which runs in $\mathcal{O}(1)$-time, yields some improvement in the logical error rate over no decoding, and the optimal error rate is achieved at low levels, $\ell\sim1\textendash3$.

\begin{figure}[htpb]
\centering
\subfigimg[width=0.5\columnwidth]{a}{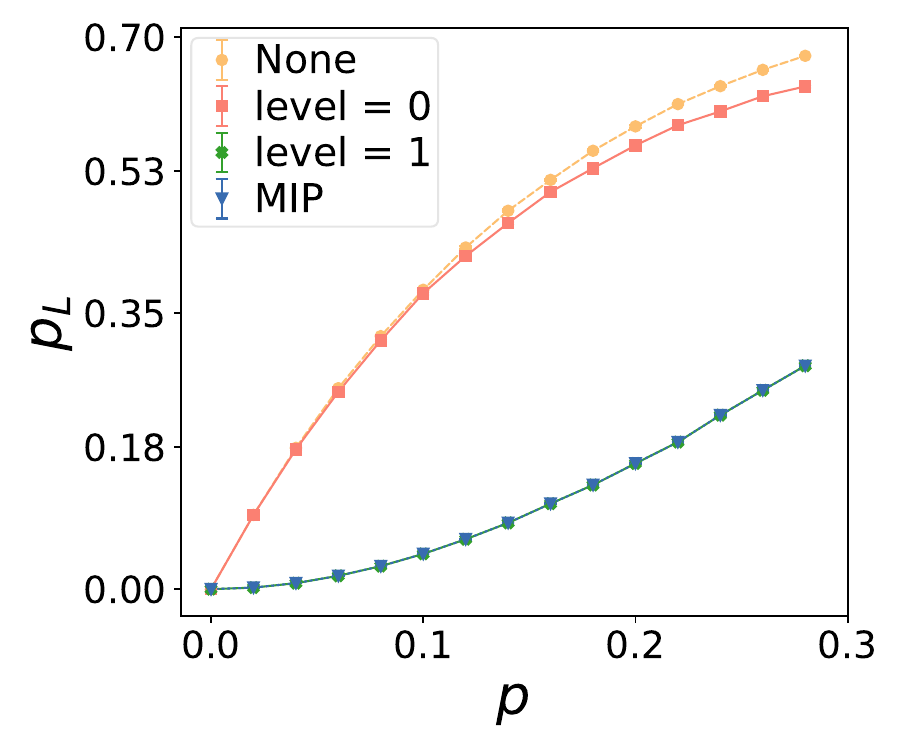}\hfill
\subfigimg[width=0.5\columnwidth]{b}{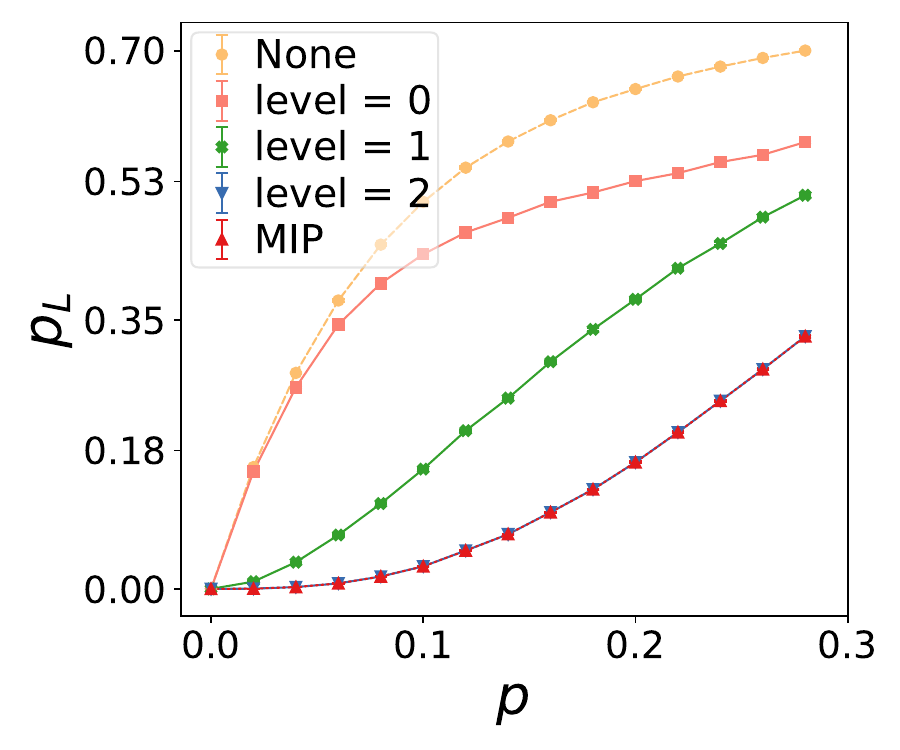}\\
\subfigimg[width=0.5\columnwidth]{c}{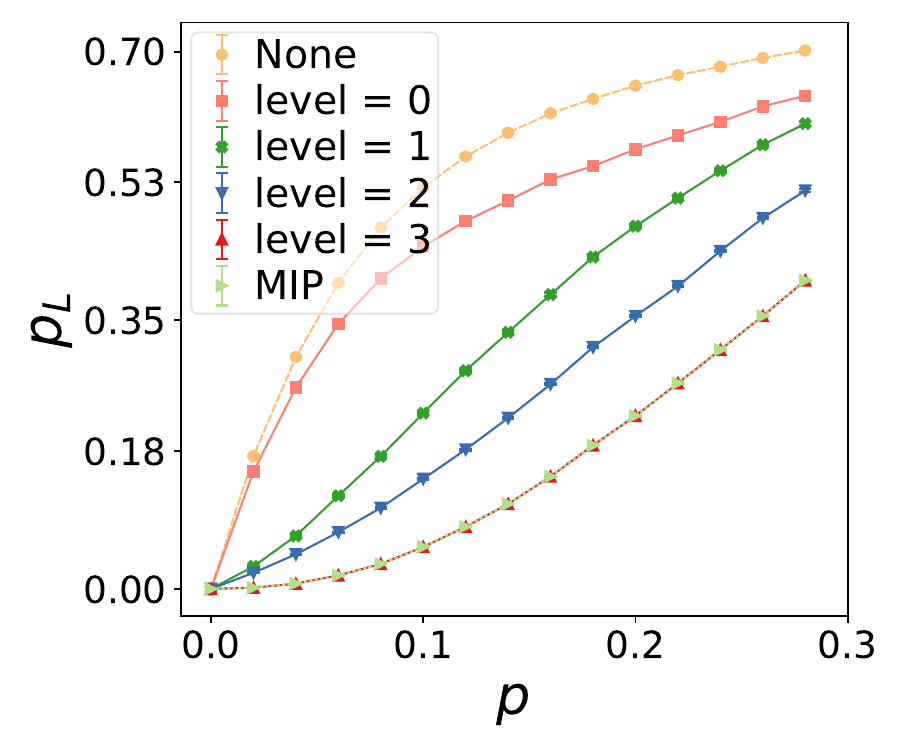}\hfill
\subfigimg[width=0.5\columnwidth]{d}{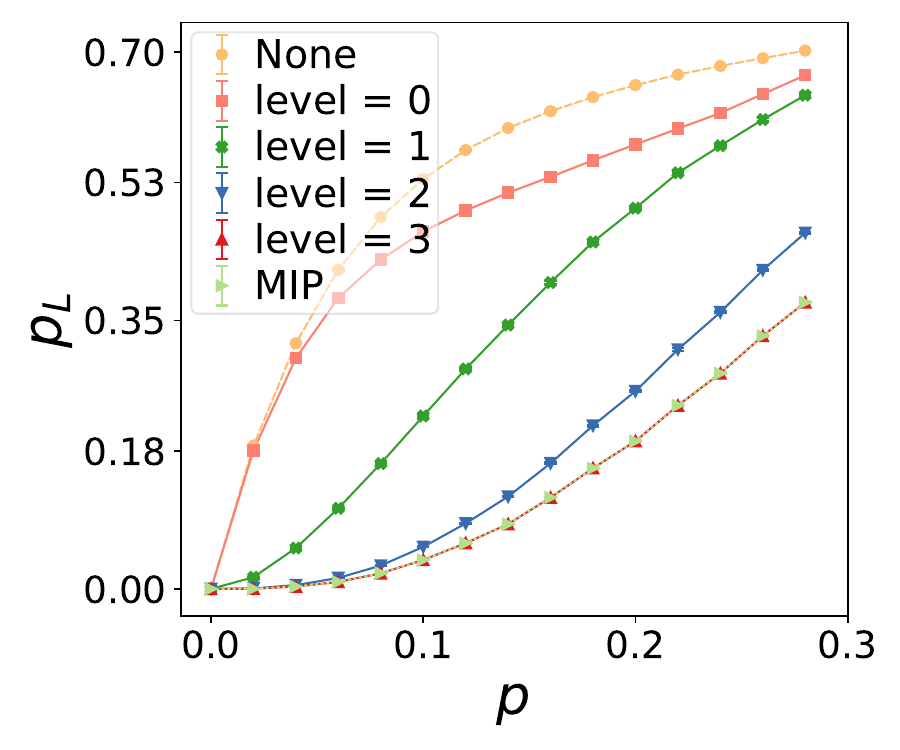}
\caption{Plot of logical error rate $p_L$ for \idg{a} $\llbracket5,1,3\rrbracket$,  \idg{b} $\llbracket9,1,3\rrbracket$,  \idg{c} $\llbracket11,1,3\rrbracket$ and \idg{d} $\llbracket11,1,5\rrbracket$ against physical error rate $p$. Error rates for different levels of the proposed decoder are shown. The logical error rates for the highest level coincide with those of MIP.}
\label{fig:plots}
\end{figure}

In Table~\ref{tab:time}, we present the average time taken by the proposed decoder for optimal decoding. The average time for the MIP decoder is also provided, and the proposed hierarchical decoder is found to require significantly less computational time.

\begin{table}[htpb]
\centering
\caption{Average run-time for optimal decoding using the proposed and the MIP decoder in milliseconds. The proposed hierarchical decoder takes significantly less time compared to the MIP decoder.}
\begin{tabular}{|c|c|c|}
\hline
\multirow{2}{*}{\textbf{QEC code}}&\multicolumn{2}{c|}{\textbf{Optimal Decoding Time}}\\
\cline{2-3}
&\textbf{Proposed Decoder}&\textbf{MIP Decoder}\\
\hline
$\llbracket5,1,3\rrbracket$&0.015&2.593\\
\hline
$\llbracket9,1,3\rrbracket$&0.062&5.643\\
\hline
$\llbracket11,1,3\rrbracket$&0.120&7.363\\
\hline
$\llbracket11,1,5\rrbracket$&0.114&7.563\\
\hline
\end{tabular}
\label{tab:time}
\end{table}

\section{\label{sec:concl_future}Conclusion and Future Work}

\subsection{\label{ssec:concl}Conclusion}

In this article, we study the impact of noise on graph states. We demonstrate that syndrome measurements map any noise onto effective Pauli noise, rendering it correctable via Pauli operations. While a one-to-one correspondence between the error syndrome and the underlying error exists when at most a single error occurs, this correspondence generally breaks down in realistic settings where arbitrary error patterns may arise. In practice, the same syndrome can therefore be associated with multiple distinct error configurations. By exploiting the structure of the underlying graph states, we show that any arbitrary Pauli error pattern can be decoded and optimally corrected using only Pauli $Z$ operations for graph codes with a single logical qubit. The required correction can be inferred directly from the syndrome bits, with a value of $-1$ signaling the presence of an error. This approach significantly reduces the decoding complexity for graph codes.

\subsection{\label{ssec:future}Future Work}

Future research could explore the performance of codes developed from other graph structures. Another direction could be to develop similar decoders for graph codes with multiple logical qubits and to extend this framework to construct decoders for other stabilizer quantum error-correcting codes. 
% Hypergraph LDPC codes

Additionally, more efficient methods for creating and correcting graph states required in quantum networks, multiparty computation, measurement-based quantum computing, and other quantum information processing and communication protocols could be investigated.

\bibliographystyle{IEEEtran}
\bibliography{IEEEabrv,references}

\end{document}